\numberwithin{equation}{section}
\theoremstyle{plain}
\newtheorem{theorem}{Theorem}[]
\newtheorem{definition}{Definition}[]
\newtheorem{lemma}{Lemma}[]
\begin{document}

\begin{frontmatter}
\title{Variational Bayes  inference and  Dirichlet process priors}
\runtitle{Variational Bayes  inference and  Dirichlet process priors}

\author{\fnms{Hui} \snm{Zhao}\ead[label=e1]{h6zhao@uwaterloo.ca}}
\address{Department of Statistics and Actuarial Science,    University of Waterloo, 200 University Avenue West, Waterloo, Ontario, Canada N2L 3G1 \printead{e1}}

\and
\author{\fnms{Paul } \snm{Marriott}\ead[label=e2]{pmarriot@uwaterloo.ca}}
\address{Department of Statistics and Actuarial Science,    University of Waterloo, 200 University Avenue West, Waterloo, Ontario, Canada N2L 3G1 \\\printead{e2}}

\runauthor{Hui Zhao and Paul Marriott}

\begin{abstract}
This paper shows how the  variational Bayes method provides a computational efficient  technique in the context of hierarchical  modelling using  Dirichlet  process priors, in particular without  requiring  conjugate prior assumption.  It shows, using the so called parameter separation parameterization, a simple criterion under which the variational method works well.     Based on this  framework,  its provides  a full variational solution for the Dirichlet process.  The numerical results show that the  method is very  computationally efficient when compared to MCMC.   Finally,  we propose an empirical method to estimate the truncation level for the truncated Dirichlet  process. 
\end{abstract}

\begin{keyword}
\kwd{Dirichlet process}
\kwd{Non-conjugate priors}
\kwd{Variational Bayes}
\kwd{Posterior predictive distribution}
\kwd{Truncated stick-breaking priors}
\end{keyword}

\end{frontmatter}

\section{Introduction}

This article  shows how to apply the variational Bayes (VB) method to hierarchical models which use the  Dirichlet process (DP)  prior.  It shows how the VB method can handle  non-conjugacy in its prior specification, which extends to standard approach to these models.  We also provide a VB approximation to the posterior predictive distribution and compare it with results derived from  two Markov chain Monte Carlo (MCMC)  methods.  For the truncated DP, we  propose an empirical method to determine the number of distinct components in a finite dimensional DP.

In Bayesian parametric modelling,  the prior distribution is usually constructed by assuming it has a particular parametric form.  In many ways, though, it is  more appealing  that the support of the prior is the class of all distribution functions.  In particular, this allows greater flexibility for modelling and inference.   The Dirichlet process, introduced by Ferguson  \citep{ferguson1973bayesian}, provides  a means of specifying  a  probability measure $P(dF)$ over the  space of all (discrete) probability measures. Following this, the  DP has become  very popular when applied to  Bayesian non-parametric inference.  Mixture models are among the important applications of the DP, for example, \cite{escobar1994estimating} and \cite{escobar1995bayesian}. In particular the  clustering property exhibited by the generalized Polya urn representation \cite{blackwell1973ferguson} makes the DP  a natural choice for the prior distribution in the mixture model.

 Markov chain Monte Carlo (MCMC) methods,   in the context of  a DP prior,  have been extensively studied, for example, see  \cite{escobar1994estimating}, \cite{escobar1995bayesian},  \cite{west1993hierarchical}, and \cite{maceachern1994estimating}.  A common aspect of these methods is that they integrate over the random probability measures  and  use the generalized Polya urn representation of the DP.  The Polya urn samplers are restricted to using conjugate base distributions that allow analytic evaluation of the transition probabilities.  When non-conjugate priors are used, these methods require an often difficult numerical integration.   MacEachem and M\"uller \citep{maceachern1998estimating}, and Neal \citep{neal2000markov} devised approaches for handling non-conjugacy by using a set of auxiliary parameters.  

The truncated stick-breaking representation of the DP has also  been considered.  For example    \cite{ishwaran2000markov} shows that with a moderate truncation, the  finite dimensional DP should be able to achieve an accurate approximation.  Based on this representation,  Ishwaran and James \cite{ishwaran2001gibbs} proposed a Gibbs sampler to handle non-conjugacy issue.   

In recent years,  variational Bayesian inference has been applied to  DP based problems, for example see \cite{blei2006variational}.  Strictly  speaking, they used the mean-field method  rather than  a full variational solution, where the approximating distributional family is specified, and the optimization is only over the variational parameters.  In addition, they also only consider the case where the  conjugate base distribution is an  exponential family.

The hierarchical principle is a natural way to model  dependence amongst model parameters.  This article considers an simple, but important,  model based on the normal distribution, in which the observed data are normally distributed with different means for each  group or  experiment, and a normal population distribution is assumed  for the group means. This model is often called the  one-way random-effects model and is widely applicable, being an important special case of the hierarchical linear model.  As \cite{maceachern1994estimating} pointed out,  restricting the prior to be a normal distribution severely constrains the estimate of normal means, producing estimators that shrink each data value toward the same point.   Replacing the normal prior  by a Dirichlet process has been considered by  \cite{maceachern1994estimating} and \cite{bush1996semiparametric} in an MCMC context.   

This article considers non-conjugate settings for this model and presents a full variational Bayesian solution, where the optimization is in terms of both the distributional family and the parameters of the approximating distribution.   The core ingredient for the proposed solution lies on  a special parameterization for a parametric family, called the parameter separation parameterization. This parameterization exhibits some particular algebraic properties, for which the VB approximations possess particularly attractive properties.   In our solution,  we use a truncated stick-breaking representation of the DP.   A natural question is raised by given a dataset  how to estimate the truncation level for a  finite dimensional DP.  We  propose an empirical method to determine the number of distinct components in a finite dimensional DP.

The posterior predictive distribution for this model is not available in a closed form.  For the VB method,   even though we can obtain closed-formed posterior approximations and use them to replace the unknown posterior densities in computing  the posterior predictive density,  it is still not available in a closed form.  In the present paper,  we show how to use the similar variational method to approximate this quantity.  

The rest of the paper is organized as follows. Section 2 presents the one-way random-effects model with a Dirichlet process prior, and shows how to use Gibbs samplers to simulate samples from the posterior distributions.  Section 3  introduces the parameter separation parameterization and a variational  approach on it.  By using these results,  we  obtain the VB solution for the one-way random-effects model with Dirichlet process prior.  Section 4 discusses how to approximate the posterior predictive distributions by the MCMCM methods and by the VB method. Numerical studies are presented in Section 5.   Conclusions are given  in Section 6.

\section{The one-way random effects model}

In this section, we describe the one-way random-effects model which uses a DP prior in a non-conjugate setting, and then show how we can adapt two MCMC methods introduced by \cite{neal2000markov} and \cite{ishwaran2001gibbs} to  obtain the posterior samples. 

In the one-way random effects model, we consider $J$ independent experiments, with experiment $j$ estimating the parameter  $\theta_{j}$ from $n_{j}$ independent normally distributed data points, $y_{ij}$, with a common unknown error variance $\sigma^{2}$. We define $y_{j}$ as $y_{j}=(y_{1j},\cdots, y_{n_{j}j})$. Parameters $\theta_{j}$ are assumed independently drawn from a normal distribution with mean $\mu$ and variance $\tau^{2}$. The parameters of $\mu$, $\tau^{2}$ and $\sigma^{2}$ are  further treated as random variables.  This model is given by 
{\setlength\arraycolsep{0.05em}
\begin{eqnarray} 
y_{ij} | \theta_{j}, \sigma^{2} &\sim& N(\theta_{j}, \sigma^{2}), \nonumber \\ 
\theta_{j}|\mu,\tau^{2} &\sim& N(\mu, \tau^{2}), \nonumber \\
(\sigma^{2}, \mu,\tau^{2}) &\sim& \pi\;\; \text{for } i= 1,\dots, n_{j};  j= 1,\dots, J,  \label{eq:ch3-one way random-effects model}
\end{eqnarray} 
}
where $\pi$ is a prior distribution.  When the normal distribution at the middle stage is  replaced by a DP,  this gives the following model:
\begin{eqnarray} 
y_{ij} | \theta_{j}, \sigma^{2} &\sim& N(\theta_{j}, \sigma^{2}), \nonumber \\ 
\theta_{j} | F &\sim& F, \nonumber \\
F | \alpha, F_{0} &\sim& \text{DP}(\alpha, F_{0}), \nonumber \\
\sigma^{2}&\sim& \pi \;\;\text{ for } i= 1,\dots, n_{j};  j= 1,\dots, J,  \label{eq:ch3-one way random-effects model with DP}
\end{eqnarray} 
where $\alpha$ is a positive real-valued concentration parameter and $F_{0}$ is a base distribution. We consider $F_{0}$ a normal distribution with mean $\mu$ and variance $\tau^{2}$, both  are  further treated as random variables.  It is worth  noting that in this setting $F_{0}$ is not  conjugate to the likelihood. 

The realizations of the  DP are discrete with probability one, thus the above model can be viewed as  a countably infinite mixture \citep{ferguson1983bayesian}.  When  integrating over $F$ in (\ref{eq:ch3-one way random-effects model with DP}), we can  obtain a representation, referred as the generalized Polya urn scheme, of the prior distribution of $\theta_{j}$ in terms of successive conditional distributions of the following form \cite{blackwell1973ferguson}: 
\[ \theta_{j}|\theta_{1},\cdots,\theta_{j-1}= \left\{ 
  \begin{array}{l l}
    \theta_{l} & \; \text{with probability }  \frac{1}{\alpha+j-1} \text{ for each } l \in \{1,\cdots, j-1\}\\
    \sim F_{0} & \; \text{with probability } \frac{\alpha}{\alpha+j-1}
  \end{array} \right. \]
This representation gives a clear view for the clustering or mixture effects of the DP prior, and constitutes a fundamental ingredient for the Polya urn form of  MCMC samplers. 

Alternatively,   \cite{sethuraman1994constructive} provides a constructive definition of the random distribution $F$ in the  DP:
\begin{eqnarray}
F = \sum_{j=1}^{T} v_{j}\delta_{\theta_{j}}, \nonumber 
\end{eqnarray} 
where  $w_{i} \stackrel{\mathrm{iid}}{\sim} \mbox{Beta}(1,\alpha)$, and $v_{j}$ is defined as $v_{1} = w_{1}$,  $v_{j} = w_{j}\prod_{l=1}^{j-1}(1-w_{l})$, and $\theta_{j} \stackrel{\mathrm{iid}}{\sim} F_{0}$, and $\delta_{\theta_{j}}$ denotes a discrete measure concentrated at $\theta_{j}$  ,   and $1 \leq T \leq \infty$. This is often referred to as  the ``stick-breaking'' representation.  If $T < \infty$, this is referred to as a truncated DP or finite dimensional DP \cite{ishwaran2000markov}.   

The exact computation of posterior quantities using  model (\ref{eq:ch3-one way random-effects model with DP}) is typically infeasible. However,  MCMC provides one means to approximate them.  Due to the non-conjugate property of model (\ref{eq:ch3-one way random-effects model with DP}),  we consider using the methods introduced in \cite{neal2000markov} and \cite{ishwaran2001gibbs} to  obtain posterior samples.  

First,  we consider the method proposed by \cite{neal2000markov} and similar to the  ``no gaps'' algorithm proposed earlier by \cite{maceachern1998estimating}.   Let $\zeta = (\zeta_{1},\cdots,\zeta_{K})$ denote the set of distinct $\theta_{j}$, where $j=1,\cdots,J$ and $K \leq J$. Let $c=(c_{1},\cdots,c_{J})$ denote a vector of indicators defined by $c_{j} = k$ if and only if $\theta_{j} = \zeta_{k}$. The state of the Markov chain consist of  $c$, $\zeta$,  $\mu$, $\tau^{2}$ and $\sigma^{2}$.  Each sampling scan consists of  picking a new value for each $c_{j}$ from its conditional distribution given y,  $\zeta$, and all the $c_{l}$ for $ l\neq j $ (written as $c_{-j}$), and then picking a new value for each $\zeta_{k}$ from its conditional distribution given $y$ and $c$, and then picking a new value for $\mu$, $\tau^{2}$ and $\sigma^{2}$ respectively from their conditional distributions.  

\begin{algorithm}[!htp] 
\caption{Polya-urn-type Gibbs sampler}
\label{algm:neal}
\begin{algorithmic}                    
\STATE 
Step 1. For $j=1,\cdots, J$, generate $c_{j}^{(t)}$ from the distribution of $c_{j}|y, \zeta, c_{-j}, \mu, \tau^{2}, \sigma^{2}$.\\
\begin{itemize}
\item Let $k^{-}$ be the number of distinct $c_{l}$ for $l\neq j$, and let $p=k^{-}+s$.  Label $c_{l}$ with values in $\{1,\cdots, k^{-}\}$.
\item Draw values independently from $F_{0}(\mu^{(t-1)}, {\tau^{2}}^{(t-1)})$ for all the $\zeta_{a}^{(t)}$ for which $k^{-}+1\leq a \leq p$. If the value of  $c_{j}^{(t-1)}$ is a singleton (only associated with one $y_{j}$),  then $\zeta_{k^{-}}$ equals to $\zeta_{c_{j}}^{(t-1)}$, otherwise draw a new value for $\zeta_{k^{-}}$ from $F_{0}(\mu^{(t-1)}, {\tau^{2}}^{(t-1)})$.  
\item Draw a value for $c_{j}^{(t)}$ from $\{1,\cdots, p\}$ with the following probability 
\[ P(c_{j}=a|c_{-j}^{(t-1)},y,  {\sigma^{2}}^{(t-1)})\propto \left\{ 
  \begin{array}{l l}
   m_{-j,a}f(y_{j}; \zeta_{a}^{(t-1)}, {\sigma^{2}}^{(t-1)} ), \text{ for } 1\leq a \leq k^{-}\\
   \frac{\alpha}{s}f(y_{j}; \zeta_{a}^{(t)}, {\sigma^{2}}^{(t-1)} ), \text{ for } k^{-} < a \leq p\\
  \end{array} \right. \]
where $m_{-j,a}$ is the number of $c_{l}$ for $l\neq j$ that are equal to c.
\item Discard the $\zeta_{a}$'s that are not now associated with any observation, and relabel $\zeta_{k}$ and corresponding $c_{j}$. 
\end{itemize}

Step 2.  For $k=1,\cdots, |c|$, generate $\zeta_{k}^{(t)}$ from the distribution of $\zeta_{k}|y,  \mu, \tau^{2}, \sigma^{2}$, which is give by 
{\setlength\arraycolsep{0.005em}
 \begin{eqnarray*}
p(\zeta_{k}|y,  \mu^{(t-1)}, {\tau^{2}}^{(t-1)}, {\sigma^{2}}^{(t-1)}) \propto \prod_{j: c_{j} = k} \prod_{i=1}^{n_{j}} \phi(y_{ij}; \zeta_{k}, {\sigma^{2}}^{(t-1)})\phi(\zeta_{k}; \mu^{(t-1)},{\tau^{2}}^{(t-1)}),
 \end{eqnarray*}
 }
where $\phi(.)$ denotes the normal density function. \\

Step 3. Generate $\mu^{(t)}$, ${\tau^{2}}^{(t)}$, and ${\sigma^{2}}^{(t)}$  from the corresponding full conditional distribution, that are given as follows:
{\setlength\arraycolsep{0.005em}
 \begin{eqnarray*}
p({\sigma^{2}}| y, \zeta_{k}^{(t)})  &\propto& \prod_{k=1}^{|c|} \prod_{j: c_{j} = k} \prod_{i=1}^{n_{j}} \phi(y_{ij}; \zeta_{k}^{(t)}, {\sigma^{2}}) \pi(\sigma^{2}) \\
p(\mu |y,  \zeta_{k}^{(t)}, {\tau^{2}}^{(t)}) &\propto& \prod_{k=1}^{|c|} \phi(\zeta_{k}^{(t)}; \mu, {\tau^{2}}^{(t)})\pi(\mu) \\
p(\tau^{2} |y, \zeta_{k}^{(t)}, {\mu}^{(t)}) &\propto& \prod_{k=1}^{|c|} \phi(\zeta_{k}^{(t)}; \mu^{(t)}, {\tau^{2}})\pi(\tau^{2}),
 \end{eqnarray*}
where $\pi(\tau^{2})$, $\pi(\sigma^{2})$, and $\pi(\mu)$ are corresponding priors.  
}
\end{algorithmic}
\end{algorithm}

The key feature to handle the issue of non-conjugacy lies that when $c_{j}$ is updated, a set of size of $s$ temporary auxiliary parameter variables that represent possible values for $\zeta_{k}$ that are not associated with any other observations is introduced.  Since the observations $y_{j}$ are exchangeable, we can assume that we are updating $c_{j}$ for the last observation, and that the $c_{l}$ for other observations have values in the set $\{1, \cdots, k^{-}\}$, where $k^{-}$ is the number of distinct $c_{l}$  for $l \neq j$. By using the auxiliary variables, the possible values for a new $c_{j}$ lies in $\{1, \cdots, k^{-}, k^{-}+1, \cdots, k^{-}+s \}$.  Once a new value for $c_{j}$ has been chosen,  all the $\zeta$ that are not now associated with any observation will be discarded, and $\zeta_{k}$ and the corresponding $c_{j}$ are  relabeled to have the $c_{j}$ with values in $\{1,\cdots, |c|\}$, where $|c|$ denotes the number of distinct number in $c$.  This Gibbs updating  for model (\ref{eq:ch3-one way random-effects model with DP}) is summarized in Algorithm \ref{algm:neal}.

In addition to handling the issue of non-conjugacy,  \cite{neal2000markov}  suggests that this method can improve the mixing of the chain and shorten the autocorrelation time  to reduce the sample size used to estimate the posterior quantities.  However, it is clear that since $F$ is integrated over,  this Polya-urn like sampler still  restricts the inference for the posterior of the random $F$ to be based only on the posterior for  $\zeta_{k}$'s, that is, there  no  explicit inference on $F$ is possible. The paper \cite{ishwaran2001gibbs}  devised a, so called, blocked Gibbs sampler, which uses the stick-breaking representation, to avoid the limitation imposed by the Polya urn like samplers.  

The  key to the blocked Gibbs sampler lies that it is infeasible to work on an infinite numbers of components in the stick-breaking representation,  and it has to truncate the DP at a curtain level, denoted as $B$, and discard the components of $B+1$, $B+2$, $\cdots$. \cite{ishwaran2001gibbs}  shows that  with a moderate truncation the marginal density under a truncated DP prior is indistinguishable from the one based on the infinite DP prior.  By using a stick-breaking representation, the one-way random-effects model given in (\ref{eq:ch3-one way random-effects model with DP}) under a truncated DP can be written as follows:
{\setlength\arraycolsep{0.05em}
\begin{eqnarray} 
&&y_{ij}|c_{j}, \zeta_{}, \sigma^{2} \sim N(\zeta_{c_{j}}, \sigma^{2}), \;\; \text{for } i= 1,\dots, n_{j};  j= 1,\dots, J, \nonumber\\
&& c_{j}|v \sim  \sum_{b=1}^{B} v_{b}\delta_{b}; \;\; v_{1} = w_{1},  v_{b} = w_{b}\prod_{l=1}^{b-1}(1-w_{l}), \nonumber\\
&& w_{b} \sim \mbox{Beta}(1,\alpha), \text{ for } b=1,\cdots, B-1, \text{ and }  w_{B} =,1 \nonumber\\
&&\zeta_{b} \sim N(\mu, \tau^{2});\; \text{for } b= 1,\dots, B,  \;\; \nonumber\\
&&(\sigma^{2}, \mu,\tau^{2})\sim \pi.   \label{eq:ch3-one way random-effects model with DP with SB}
\end{eqnarray} 
}
In this model, the  state of the Markov chain consist of  $c$, $\zeta$, $v$, $\mu$, $\tau^{2}$ and $\sigma^{2}$.  The blocked Gibbs sampling for  model  (\ref{eq:ch3-one way random-effects model with DP with SB}) is summarized in Algorithm \ref{algm:ishwaran}.
\begin{algorithm}[!htp] 
\caption{Blocked Gibbs sampler}
\label{algm:ishwaran}
\begin{algorithmic}                    
\STATE 
Step 1. For $j=1,\cdots, J$, generate $c_{j}^{(t)}$ from the distribution of $c_{j}|y, \zeta, v, \sigma^{2}$, that is given by:
{\setlength\arraycolsep{0.005em}
 \begin{eqnarray*}
 p(c_{j}|y, &&\zeta, v, \sigma^{2})  = \sum_{b=1}^{B} p_{b, j}\delta_{b}, \; \text{ where } p_{b,j} \propto v_{b}^{(t-1)}\prod_{i=1}^{n_{j}}\phi(y_{ij};\zeta_{b}^{(t-1)}, \sigma^{2}{^{(t-1)}})
 \end{eqnarray*}
}

Step 2.  For $b=1,\cdots, B$, generate $\zeta_{b}^{(t)}$ as follows:  
\begin{itemize}
\item When  $\zeta_{b}^{(t)}$ is not associated with any $y_{j}$,  draw a new value from  $F_{0}(\mu^{(t-1)}, {\tau^{2}}^{(t-1)})$.
\item Otherwise,  draw a new value from  the following conditional distribution:
{\setlength\arraycolsep{0.005em}
 \begin{eqnarray*}
p(\zeta_{b}|y,  \mu^{(t-1)}, {\tau^{2}}^{(t-1)}, {\sigma^{2}}^{(t-1)}) \propto  \prod_{j: c_{j} = b} \prod_{i=1}^{n_{j}} \phi(y_{ij}; \zeta_{b}, {\sigma^{2}}^{(t-1)})\phi(\zeta_{b}; \mu^{(t-1)},{\tau^{2}}^{(t-1)}),
 \end{eqnarray*}
}

\end{itemize}

Step 3. Generate $v^{(t)}$  from the following conditional distribution:
{\setlength\arraycolsep{0.005em}
 \begin{eqnarray*}
v_{1}^{(t)} &=& w_{1}^{(t)},  v_{b}^{(t)} = w_{b}^{(t)}\prod_{l=1}^{b-1}(1-w_{l}^{(t)}),\\
w_{b}^{(t)} &\sim& \mbox{Beta}(M_{b},\alpha+\sum_{l=b+1}^{B}M_{l}); \;\text{$M_{b}$ is the number of $c_{j}$ equals to $b$}
 \end{eqnarray*}
}

Step 4. Generate $\mu^{(t)}$, ${\tau^{2}}^{(t)}$, and ${\sigma^{2}}^{(t)}$  from the corresponding full conditional distribution, that are given as follows:
{\setlength\arraycolsep{0.005em}
 \begin{eqnarray*}
p({\sigma^{2}}| y, \zeta_{b}^{(t)})  &\propto& \prod_{b=1}^{B} \prod_{j: c_{j} = b} \prod_{i=1}^{n_{j}} \phi(y_{ij}; \zeta_{b}^{(t)}, {\sigma^{2}}) \pi(\sigma^{2}) \\
p(\mu |y,  \zeta_{b}^{(t)}, {\tau^{2}}^{(t)}) &\propto& \prod_{b=1}^{B} \phi(\zeta_{b}^{(t)}; \mu, {\tau^{2}}^{(t)})\pi(\mu) \\
p(\tau^{2} |y, \zeta_{b}^{(t)}, {\mu}^{(t)}) &\propto& \prod_{b=1}^{B} \phi(\zeta_{b}^{(t)}; \mu^{(t)}, {\tau^{2}})\pi(\tau^{2}),
 \end{eqnarray*}
where $\pi(\tau^{2})$, $\pi(\sigma^{2})$, and $\pi(\mu)$ are corresponding priors.  
}
\end{algorithmic}
\end{algorithm}

\section{Variational Bayesian method}

As an alternative to MCMC methods, the VB method provides analytical approximations to  posterior quantities and in practice it  has been demonstrated to be  very much faster to implement.  The core of the method builds on the basis of maximization of a lower bound of the  logarithm of the marginal likelihood.  Early developments of the method can be found in the applications on neural networks, \cite{1993-Hinton&Camp}, and \cite{1995-Mackay}. The method has been successfully applied in many different disciplines and domains, and in recent years, it has obtained more attention from both the application and theoretical perspective in the mainstream of Statistics, for example, see \cite{2002-Hall-Humphreys-Titterington}, \cite{2006-Wang-Titterington}, \cite{2010-Ormerod-Wand}, \cite{2009-McGrory-Titterington-Reeves-Pettitt}, and \cite{2011-Faes-Ormerod-Wand}.

Generally, variational inference has been mainly developed in the context of the exponential family.  For example,  \cite{beal2003variational} and \cite{wainwright2008graphical} provide a general variational formalism for the conjugate exponential family.  There  are several limitations with these developments.  First,  they mainly consider the cases assuming   conjugate priors.  Second,  the variational inferences are developed only with respect to natural parameters, which are often not the parameters of immediate interests. In the present paper,  we show that  VB inferences can be extended to a more general situation, where we consider a particular form of a parameterization for a parametric family, which  we call the {\em parameter separation parameterization}, which is defined as follows:
\begin{definition}\label{df: LPSF}
A parametric family $\{P_{\tau}: \tau \in R^{d}\}$ is said to  have a parameter separation parameterization if and only if the logarithm of its { density function} can be written as 
\begin{eqnarray}
\log f(y) = h(y)+\displaystyle \sum_{c=1}^{C} \left( \prod_{i=1}^{d} g_{c,i}(\tau_{i}, y) \right), \label{eq:defition-LPSF}
\end{eqnarray} 
where $C$ is a positive integer, and $h$ and $g_{c,i}$ are real-valued functions.
\end{definition}
Many distributions can be written in the form of (\ref{eq:defition-LPSF}).  We can list a few examples:   normal,  inverse Gamma, Pareto, Laplace, Weibull, finite discrete distributions. These include both exponential family and non-exponential family examples.  An important feature of this representation lies that when taking expectation on $\log f(y)$,  the  right hand side of (\ref{eq:defition-LPSF}) provides a factorized form, which is the key to make possible the construction of  the analytical form of the variational distributions.   Moreover, we will see  from the following theorem that with this parameterization the distributional families of VB approximations have particularly tractable  forms and these forms are not changed during the iterative updates. Also  the convergence of variational parameters can be used as the stopping rule for the iterative updates of the VB method instead to evaluate the computationally burdensome lower bound. 

VB gains its computational advantages by making simplifying assumptions about the posterior dependence structure.  A full factorization, which assumes that  all  model parameters are independent of each other in the approximating distribution,  is the  most commonly used scheme. However, we consider a factorization scheme in which more flexible dependence structures can be used.  Suppose  $\tau$ is a $d$ dimension parameter vector, indexed by  $I = \{1, \cdots, d\}$. We consider a VB approximation for the posterior $p(\tau|y)$, which is factorized as,  
\begin{eqnarray}
q(\tau) = \displaystyle \prod_{i}^{K}q(\tau_{\mathcal{F}_{i}}) = \displaystyle \prod_{i}^{K} q(\tau_{\mathcal{C}_{i}} |\tau_{\mathcal{P}_{i}})  q(\tau_{\mathcal{P}_{i}}), \label{eq:ch2-VBMA-factorization}
\end{eqnarray}
where $\{\mathcal{F}_{i}\}_{i=1}^{K}$ is a partition of the index set $I$, for $K \leq d$, and $\mathcal{F}_{i} = \mathcal{C}_{i} \cup \mathcal{P}_{i}$ and $\mathcal{C}_{i} \neq \emptyset$ for $i=1,\cdots, K$.  If the set $\mathcal{P}_{i}$ is  an empty set, then $q(\tau_{\mathcal{C}_{i}} |\tau_{\mathcal{P}_{i}})$ denotes the  unconditional  density $q(\tau_{\mathcal{C}_{i}})$.   We denote $\setminus \mathcal{C}$ as the complement set of $\mathcal{C}$  in $I$.

The following theorem gives a general formularization for the variational inference on the parameter separation parameterization with a factorization scheme given in (\ref{eq:ch2-VBMA-factorization}).
\begin{theorem}
\label{thm: ch2-DP-VBMA}
Suppose $y = \{y_{j}\}_{j=1}^{J}$ are $i.i.d.$ from a distribution having a parameter separation parameterization,  where $\tau \in R^{d}$, then 
\begin{description}
\item[($i$)]  the $q(\tau_{\mathcal{C}_{i}} | \tau_{\mathcal{P}_{i}} )$ in  (\ref{eq:ch2-VBMA-factorization}) is given by
{\setlength\arraycolsep{0.005em}
\begin{eqnarray}
&&q(\tau_{\mathcal{C}_{i}} | \tau_{\mathcal{P}_{i}} )  
 \propto  \pi(\tau_{\mathcal{C}_{i}} | \tau_{\mathcal{P}_{i}}) \exp (\displaystyle \sum_{j}^{J}  ( h(y_{j})+\displaystyle \sum_{c=1}^{C_{C_{i}}} g_{c}(\tau_{\mathcal{C}_{i}}, y_{j}) K_{C_{i},c}^{*} + J_{C_{i}}^{*} ) ), \label{eq: thm-ch2-DP-VBMA-1}
\end{eqnarray}
} 
where $K_{C_{i},c}^{*} = E_{ q( \mbox{\boldmath{$\tau$} }_{\setminus \mathcal{C}_{i}\cup\mathcal{P}_{i}  } ) }\left [K_{C_{i},c}\right ] $ and $J_{C_{i}}^{*}$= $E_{ q( \mbox{\boldmath{$\tau$} }_{\setminus \mathcal{C}_{i}\cup\mathcal{P}_{i}  } ) }\left [J_{C_{i}} \right ]$, $K_{{C}_{i},c}$ and $J_{{C}_{i}}$ are constant with respect to  $\tau_{\mathcal{C}_{i}}$, and $\pi(\tau_{\mathcal{C}_{i}} | \tau_{\mathcal{P}_{i}})$ is a prior.

\item[($ii$)]   the  $q( \tau_{\mathcal{P}_{i}} )$ in  (\ref{eq:ch2-VBMA-factorization}) is given by 
{\setlength\arraycolsep{0.005em}
\begin{eqnarray}
q(\tau_{\mathcal{P}_{i}} ) \propto&&  \pi(\tau_{\mathcal{P}_{i}} )\exp\left( E_{ q(\tau_{\mathcal{C}_{i}} | \tau_{\mathcal{P}_{i}}  )   }\bigg[\log \frac{\pi(\tau_{\mathcal{C}_{i}} | \tau_{\mathcal{P}_{i}} )}{q(\tau_{\mathcal{C}_{i}} | \tau_{\mathcal{P}_{i}}  ) } \bigg] \right)\nonumber \\
&&  \exp (\displaystyle \sum_{j}^{J} ( h(y_{j})+\displaystyle \sum_{c=1}^{C_{P_{i}}} g_{c}(\tau_{\mathcal{P}_{i}}, y_{j}) K_{P_{i},c}^{*} + J_{P_{i}}^{*} ) ) 
\label{eq: thm-ch2-DP-VBMA-2}
\end{eqnarray}
}
where $K_{P_{i},c}^{*} = E_{ q( \tau_{\setminus \mathcal{P}_{i}}) }\left [K_{P_{i},c}\right ] $ and $J_{P_{i}}^{*}=E_{ q( \tau_{\setminus \mathcal{P}_{i}}) }\left [J_{P_{i}} \right ]$, $K_{{P}_{i},c}$ and $J_{{P}_{i}}$ are constant with respect to  $\tau_{\mathcal{P}_{i}}$, and $\pi( \tau_{\mathcal{P}_{i}})$ is a prior. 
\end{description}
\end{theorem}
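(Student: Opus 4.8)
The plan is to run the standard variational calculus --- maximize the lower bound
\[
\mathcal{L}(q) \;=\; E_{q}\!\left[\log p(y,\tau)\right] - E_{q}\!\left[\log q(\tau)\right] \;\le\; \log p(y)
\]
on the log marginal likelihood by coordinate ascent over the factors of (\ref{eq:ch2-VBMA-factorization}), but treating $q(\tau_{\mathcal{C}_{i}}\mid\tau_{\mathcal{P}_{i}})$ and $q(\tau_{\mathcal{P}_{i}})$ as two separate variational blocks, and then substituting the parameter separation form (\ref{eq:defition-LPSF}) to make the $\tau$-dependence of each update explicit. Here $p(y,\tau) = \big(\prod_{j=1}^{J} f(y_{j};\tau)\big)\pi(\tau)$; by the $i.i.d.$ hypothesis $\log p(y\mid\tau) = \sum_{j=1}^{J}\log f(y_{j};\tau)$, and the prior is taken to factor compatibly with (\ref{eq:ch2-VBMA-factorization}), namely $\pi(\tau)=\prod_{i=1}^{K}\pi(\tau_{\mathcal{C}_{i}}\mid\tau_{\mathcal{P}_{i}})\pi(\tau_{\mathcal{P}_{i}})$, as in the model of Section~2.

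For part~($i$) I would fix all other factors and collect the terms of $\mathcal{L}(q)$ involving $q(\tau_{\mathcal{C}_{i}}\mid\tau_{\mathcal{P}_{i}})$, namely $E_{q}[\log p(y\mid\tau)] + E_{q}[\log\pi(\tau_{\mathcal{C}_{i}}\mid\tau_{\mathcal{P}_{i}})] - E_{q}[\log q(\tau_{\mathcal{C}_{i}}\mid\tau_{\mathcal{P}_{i}})]$. Because $q(\tau_{\mathcal{C}_{i}}\mid\tau_{\mathcal{P}_{i}})$ is a \emph{conditional} density, the marginal $q(\tau_{\mathcal{P}_{i}})$ enters only as an overall nonnegative weight, so the optimization decouples pointwise in $\tau_{\mathcal{P}_{i}}$ and the usual Gibbs-form argument gives
\[
q(\tau_{\mathcal{C}_{i}}\mid\tau_{\mathcal{P}_{i}})\;\propto\; \pi(\tau_{\mathcal{C}_{i}}\mid\tau_{\mathcal{P}_{i}})\,\exp\!\Big(E_{q(\tau_{\setminus(\mathcal{C}_{i}\cup\mathcal{P}_{i})})}\!\big[\log p(y\mid\tau)\big]\Big)
\]
for each value of $\tau_{\mathcal{P}_{i}}$. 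Substituting (\ref{eq:defition-LPSF}) and splitting, for each $c$, the product $\prod_{i'=1}^{d}g_{c,i'}(\tau_{i'},y_{j})$ into the indices lying in $\mathcal{C}_{i}$ and those lying outside, the expectation acts only on the outside factors and returns a constant $K_{C_{i},c}^{*}$ multiplying the $\mathcal{C}_{i}$-part; relabelling those $c$ whose $\mathcal{C}_{i}$-part is a genuine function of $\tau_{\mathcal{C}_{i}}$ as $g_{c}(\tau_{\mathcal{C}_{i}},y_{j})$, $c=1,\dots,C_{C_{i}}\le C$, and sweeping the remaining $c$ together with the $h(y_{j})$ into $J_{C_{i}}^{*}$, yields (\ref{eq: thm-ch2-DP-VBMA-1}).

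For part~($ii$) I would instead isolate the terms of $\mathcal{L}(q)$ involving $q(\tau_{\mathcal{P}_{i}})$. Besides $E_{q}[\sum_{j}\log f(y_{j};\tau)]$, $E_{q}[\log\pi(\tau_{\mathcal{P}_{i}})]$ and the entropy $-E_{q}[\log q(\tau_{\mathcal{P}_{i}})]$, there are the two terms $E_{q}[\log\pi(\tau_{\mathcal{C}_{i}}\mid\tau_{\mathcal{P}_{i}})]$ and $-E_{q}[\log q(\tau_{\mathcal{C}_{i}}\mid\tau_{\mathcal{P}_{i}})]$, which depend on $\tau_{\mathcal{P}_{i}}$ through the now-fixed conditional factor; writing them together as $E_{q(\tau_{\mathcal{P}_{i}})}\big[\,E_{q(\tau_{\mathcal{C}_{i}}\mid\tau_{\mathcal{P}_{i}})}[\log(\pi(\tau_{\mathcal{C}_{i}}\mid\tau_{\mathcal{P}_{i}})/q(\tau_{\mathcal{C}_{i}}\mid\tau_{\mathcal{P}_{i}}))]\,\big]$ exhibits the bracketed quantity as an extra log-potential for $\tau_{\mathcal{P}_{i}}$. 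The Gibbs-form optimizer over $q(\tau_{\mathcal{P}_{i}})$ is then proportional to $\pi(\tau_{\mathcal{P}_{i}})\exp(E_{q(\tau_{\mathcal{C}_{i}}\mid\tau_{\mathcal{P}_{i}})}[\log(\pi/q)])\exp(E_{q(\tau_{\setminus\mathcal{P}_{i}})}[\log p(y\mid\tau)])$, and the same split-and-regroup of (\ref{eq:defition-LPSF}) applied to the likelihood factor --- now separating the $\mathcal{P}_{i}$-indexed part of each $\prod_{i'}g_{c,i'}$ --- turns the last exponent into $\sum_{j}(h(y_{j})+\sum_{c=1}^{C_{P_{i}}}g_{c}(\tau_{\mathcal{P}_{i}},y_{j})K^{*}_{P_{i},c}+J^{*}_{P_{i}})$, which is (\ref{eq: thm-ch2-DP-VBMA-2}).

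The hard part is not the algebra but the handling of the conditional factor in the variational calculus: one must check that maximizing $\mathcal{L}$ over a conditional density, rather than a marginal one, still returns the Gibbs form pointwise in the conditioning variable, and, in part~($ii$), must correctly attribute to the $q(\tau_{\mathcal{P}_{i}})$ update the cross-prior/entropy term $E_{q(\tau_{\mathcal{C}_{i}}\mid\tau_{\mathcal{P}_{i}})}[\log(\pi/q)]$ arising because the conditional factor, though held fixed as a function, still carries $\tau_{\mathcal{P}_{i}}$-dependence. Everything else --- the $i.i.d.$ product form, the split of each $\prod_{i'}g_{c,i'}$ into in-block and out-of-block factors, and the identification of $C_{C_{i}},C_{P_{i}}$ and of the constants $K,J$ --- is routine bookkeeping once the convention for the complement $\setminus(\mathcal{C}_{i}\cup\mathcal{P}_{i})$ is fixed.
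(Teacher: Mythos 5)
Your proposal is correct and follows essentially the same route as the paper: the paper first packages your ``Gibbs-form pointwise in $\tau_{\mathcal{P}_{i}}$'' step as Lemma~\ref{thm: ch2-VBMA} (derived by writing $\mathrm{KL}(q\,\|\,p(\cdot|y))$ as $\log p(y)$ minus the lower bound and zeroing an expected KL term, which is the same optimization you perform by coordinate ascent on $\mathcal{L}(q)$), and then, exactly as you do, assumes the prior factorizes compatibly with (\ref{eq:ch2-VBMA-factorization}), substitutes the parameter separation form (\ref{eq:defition-LPSF}) for the likelihood, and uses linearity of expectation to pull out the constants $K^{*}$ and $J^{*}$. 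No substantive difference.
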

\begin{proof}: see Appendix. \end{proof}
Given that the density function of $y_{j}$ can be expressed as (\ref{eq:defition-LPSF}),  we can write the likelihood function with respect to $\tau_{\mathcal{C}_{i}}$  as follows:
{\setlength\arraycolsep{0.005em}
\begin{eqnarray}
p(y|\tau)  &=&  \exp (\displaystyle \sum_{j}^{J}  ( h(y_{j})+\displaystyle \sum_{c=1}^{C_{C_{i}}} g_{c}(\tau_{\mathcal{C}_{i}}, y_{j}) K_{C_{i},c} + J_{C_{i}})), \label{eq: ch2-DP-VBMA-likelihood-1} 
\end{eqnarray}
}
where $K_{{C}_{i},c}$ and $J_{{C}_{i}}$ are given in (\ref{eq: thm-ch2-DP-VBMA-1}). We see that the expression  (\ref{eq: thm-ch2-DP-VBMA-1})   shares the same set of functions of $\{g_{c}(\tau_{\mathcal{F}_{i}}, y_{i})\}_{c=1}^{C_{i}}$ with (\ref{eq: ch2-DP-VBMA-likelihood-1}), and the difference between (\ref{eq: thm-ch2-DP-VBMA-1}) and (\ref{eq: ch2-DP-VBMA-likelihood-1})  only lies on  the constant terms of $\{K_{C_{i},c}^{*}\}_{c=1}^{C_{C_{i}}}$ and $J_{C_{i}}^{*}$ up to the prior.  It is similar for $q( \tau_{\mathcal{P}_{i}})$ in (\ref{eq: thm-ch2-DP-VBMA-2}).  This implies that given the likelihood function,  the distributional forms of (\ref{eq: thm-ch2-DP-VBMA-1}) and (\ref{eq: thm-ch2-DP-VBMA-2}) are fixed, and then the lower bound of the log marginal likelihood becomes a function of the parameters of approximation distributions.  The convergence of these parameters is sufficient to guarantee the convergence of the lower bound.  Due to the linearity property of expectation,  Theorem \ref{thm: ch2-DP-VBMA} is easy to be extended to a hierarchical setting, as long as at each layer or stage,  the parametric family has a parameter separation parameterization.

Theorem \ref{thm: ch2-DP-VBMA} is ready to be used in developing the variational inference for the one-way random-effects model with the DP prior. Here, we consider the stick-breaking representation given in (\ref{eq:ch3-one way random-effects model with DP with SB}).  We define $c_{j}$ in (\ref{eq:ch3-one way random-effects model with DP with SB}) as $c_{j}= (c_{j1},\cdots,c_{jB})$, where $c_{jb}$ is an indicator variable with probability $v_{b}$  of  equalling  to one. This probability is  given in  (\ref{eq:ch3-one way random-effects model with DP with SB}). The joint probability of $y, c, v, \zeta, \sigma^{2}, \mu, \tau^{2}$ is given as follows:
{\setlength\arraycolsep{0.005em}
\begin{eqnarray}
p(y, c, v, \zeta, \sigma^{2}, \mu, \tau^{2}) = &&\nonumber\\
&&\prod_{j=1}^{J}\prod_{b=1}^{B}\left\{v_{b}\prod_{i=1}^{n_{j}} \phi(y_{ij};\zeta_{b},\sigma^{2})\right\}^{c_{jb}} \prod_{b=1}^{B}\phi(\zeta_{b};\mu,\tau^{2}) \nonumber\\
&&\prod_{b=1}^{B-1}\text{Beta}(w_{b};1, \alpha) \pi(\sigma^{2})\pi(\mu)\pi(\tau^{2}), \label{eq: joint with DP}
\end{eqnarray} }
where $\pi(\sigma^{2})$,$\pi(\mu)$, and $\pi(\tau^{2})$ are the prior distributions.  To have these priors  providing little influence on the posterior distributions,  we assign non-informative uniform priors for $\mu$, $\log(\sigma^{2})$, and $\tau^{2}$. If we were to assign a uniform prior distribution for $\log(\tau^{2})$, the posterior distribution would be improper.  Thus,  we get the prior distribution for $\mu$, $\log(\sigma^{2})$, and $\tau^{2}$ is given by $\pi(\sigma^{2}, \mu, \tau^{2}) \propto \frac{1}{\sigma^{2}}$

We denote $q(c, v, \zeta, \sigma^{2}, \mu, \tau^{2})$ as the VB approximation for the posterior distribution of $p(c, v, \zeta, \sigma^{2}, \mu, \tau^{2}|y)$.  In contrast to the mean field approximation,  we do not require any distributional families to $q$, except for the  independence assumption.  We assume $q$ has the following factorization form:
{\setlength\arraycolsep{0.005em}
\begin{eqnarray}
&&q(c, v, \zeta, \sigma^{2}, \mu, \tau^{2})= \prod_{j=1}^{J}q(c_{j})\prod_{b=1}^{B}q(v_{b}) \prod_{b=1}^{B}q(\zeta_{b}) q(\sigma^{2})q(\mu|\tau^{2})q(\tau^{2}).
\end{eqnarray}
}
It is worth noting  that using  a full factorization with $q(\mu, \tau^{2})= q(\mu)q(\tau^{2})$,   results in that the convergence of variational parameters fails in the iterative updates. 

It is straightforward to check that the distributions at each stage of model (\ref{eq:ch3-one way random-effects model with DP with SB}) all have a parameter separate parameterization, and then Theorem \ref{thm: ch2-DP-VBMA} is can be used.  By plugging (\ref{eq: joint with DP}) into (\ref{eq: thm-ch2-DP-VBMA-1}) or (\ref{eq: thm-ch2-DP-VBMA-2}), we can obtain the following results:
{\setlength\arraycolsep{0.005em}   
\begin{eqnarray}
q(c_{j}) &=& \text{ Multinomial}(r_{j1},\cdots, r_{jB}) \nonumber\\
r_{jb}  &\propto& \exp\{-\frac{1}{2}\frac{g}{h}\sum_{i=1}^{n_{j}}(y_{ij}-a_{b})^{2} -\frac{1}{2}\frac{g}{h}b_{b}^{2}n_{j}+ \psi(c_{b})\nonumber\\
&& - \psi(c_{b}+d_{b}) +\sum_{l=1}^{b-1}(\psi(c_{l}) - \psi(c_{l}+d_{l})) \}\nonumber\\
q(\zeta_{b}) &=& N(a_{b}, b_{b}^{2}); \;\; a_{b} = \frac{\frac{g}{h}\sum_{j=1}^{J}r_{jb}(\sum_{i=1}^{n_{j}}y_{ij})+\frac{k}{s}e}{\frac{g}{h}\sum_{j=1}^{J}r_{jb}n_{j}+\frac{k}{s}}, b_{b}^{2} = \frac{1}{\frac{g}{h}\sum_{j=1}^{J}r_{jb}n_{j}+\frac{k}{s}}\nonumber\\
q(v_{b}) &=& \text{Beta}(c_{b}, d_{b});\;\;c_{b}= \sum_{j=1}^{J}r_{jb}+1, d_{b}= \sum_{l=b+1}^{B}\sum_{j=1}^{J}r_{jb}+\alpha\; (\text{for } b<B), d_{B}= \alpha;\nonumber\\
q(\mu|\tau^{2}) &=& N(e, \frac{\tau^{2}}{f^{2}});\;\; e=\frac{\sum_{b=1}^{B}a_{b}}{B}, f^{2} = B \nonumber\\
q(\tau^{2}) &=& \text{IG}(k,s);\;\; k=\frac{B}{2}-\frac{3}{2}, s=\frac{1}{2}\sum_{b=1}^{B}((a_{b}-e)^{2}+b_{b}^{2}) \nonumber\\
q(\sigma^{2}) &=& \text{IG}(g,h);\;\; g=\frac{\sum_{j=1}^{J}n_{j}}{2}, h=\frac{1}{2}\sum_{j=1}^{J}\sum_{b=1}^{B}r_{jb}(\sum_{i=1}^{n_{j}}(y_{ij}-a_{b})^{2}+b_{b}^{2}), \label{vb-sol}
\end{eqnarray}
}
where $\psi$ denotes the digamma function, and $\text{IG}$ denotes the gamma distribution.  The above approximations are well-recognised distributions, and they are easy to use to make further inference on parameters.  The VB algorithm requires an iterative updates on the parameters of $r_{jb}$, $a_{b}$, $b_{b}^{2}$, $c_{b}$, $ d_{b}$, $e$, $f$, $g$, $h$, $k$, and $s$ till they converge.

\section{The predictive distribution} 

The posterior predictive distribution provides a distribution for a new data point given the observed data, in which it makes use of the entire posterior distribution.     Suppose $y^{*}=(y^{*}_{1},\cdots, y^{*}_{n^{*}})$ is a new observation,  then the posterior predictive distribution of $y^{*}$ given $y$ is defined as
{\setlength\arraycolsep{0.005em}   
\begin{eqnarray}
p(y^{*}|y) = \int p(y^{*}|\Theta)p(\Theta|y) d\Theta, \label{ppd}
\end{eqnarray}} where $\Theta$ refers the model parameters.  For the one-way random-effects model with a DP prior  this quantity is intractable however   MCMC methods provide a straightforward approximation.  Having a sample of $T$ points from the posterior,  we can estimate it by 
{\setlength\arraycolsep{0.005em}   
\begin{eqnarray}
p(y^{*}|y) = \frac{1}{T}\sum_{t=1}^{T} p(y^{*}|\Theta^{(t)}), \label{ppd-approxi}
\end{eqnarray}
}
where $\Theta^{(t)}$ is the sample drawn from the posterior distribution after the chain reaches its stationary distribution.  For Algorithm \ref{algm:neal}, $p(y^{*}|\Theta^{(t)})$ is given as follows:
{\setlength\arraycolsep{0.005em}   
\begin{eqnarray}
p(y^{*}|\Theta^{(t)}) = \sum_{k=1}^{|{c^*}^{(t)}|} P({c^*}^{(t)}=k )f(y^{*}|\zeta_{k}^{(t)}, {\sigma^{2}}^{(t)}) \nonumber
\end{eqnarray}
}
where again $|{c^*}^{(t)}|$ denotes the number of values which ${{c^*}^{(t)}}$ takes.  For Algorithm \ref{algm:ishwaran}, it is given as follows:
{\setlength\arraycolsep{0.005em}   
\begin{eqnarray}
p(y^{*}|\Theta^{(t)}) = \sum_{b=1}^{B}v_{b}^{(t)} f(y^{*}|\zeta_{b}^{(t)}, {\sigma^{2}}^{(t)}) \nonumber
\end{eqnarray}
}

For the VB method,  it is  natural  to use the VB approximations to replace the unknown posterior distributions in (\ref{ppd}).  Thus, we can have the following approximation for the posterior predictive distribution:
{\setlength\arraycolsep{0.005em}   
\begin{eqnarray}
&&p(y^{*}|y) \approx \int \left( \sum_{b=1}^{B}v_{b} f(y^{*}|\zeta_{b}, {\sigma^{2}}) \right)d Q(v, \zeta, \sigma^{2}) \nonumber\\
 &=& \sum_{b=1}^{B} \text{E}_{q(v_{b})}[v_{b}] \int\left(f(y^{*}|\zeta_{b}, {\sigma^{2}}) \right)d Q(\zeta_{b})d Q(\sigma^{2})
 \label{ppd-vb}
\end{eqnarray}
}
where $Q$ is the VB approximation. Unfortunately, although we have obtained the simple and well-recognised distributions for $Q(\zeta_{b})$ and $Q(\sigma^{2})$,  the integrals in (\ref{ppd-vb}) are still not available in a closed form. However, we can apply the variational principle again to obtain a lower bounds on this quantity, and propose using this lower bound as an approximation for  the posterior predictive distribution.
  
We denote $L_{b}$ as $L_{b} = \int\left(f(y^{*}|\zeta_{b}, {\sigma^{2}}) \right)d Q(\zeta_{b})d Q(\sigma^{2})$.  If we regard $Q(\zeta_{b})$ and $Q(\sigma^{2})$ as prior  distributions,  then $L_{b}$ can be regarded as a marginal likelihood, that can be approximated by the  variational method.  We denote $v(\zeta_{b})$ and $v(\sigma^{2})$ as the variational approximations which result  from  treating $Q(\zeta_{b})$ and $Q(\sigma^{2})$ as priors. Again,  Theorem \ref{thm: ch2-DP-VBMA} can be used to obtain the distributional forms for $v(\zeta_{b})$ and $v(\sigma^{2})$, and gives the following results:
{\setlength\arraycolsep{0.005em}   
\begin{eqnarray*}
v(\zeta_{b}) &=& N(A_{b}, B_{b}^{2}); \nonumber\\
&& A_{b} = \frac{\frac{G}{H}\sum_{i=1}^{n^{*}}y_{i}^{*}+\frac{a_{b}}{b_{b}^{2}}}{\frac{G}{H}n^{*}+\frac{1}{b_{b}^{2}}}, B_{b}^{2} = \frac{1}{\frac{G}{H}n^{*}+\frac{1}{b_{b}^{2}}}\nonumber\\
v(\sigma^{2}) &=& \text{IG}(G,H);\nonumber\\
&& G=g+\frac{n^{*}}{2}, H=h+\frac{1}{2}S^{*} + \frac{n^{*}}{2}((A_{b}-\bar{y^{*}})^{2}+B_{b}^{2}), 
\end{eqnarray*}
}
where $n^{*}$ is the number of observations in $y^{*}$, and $\bar{y^{*}}$ is the mean of $y^{*}$, and $S^{*}$ is the total sum of squares of $y^{*}$, and $a_{b}$, $b_{b}^{2}$, $g$, and $h$ are given in (\ref{vb-sol}). 

Once the variational parameters of $A_{b}$, $B_{b}^{2}$, $G$, and $H$ converge,  we can obtain a lower bound of the logarithm of $L_{b}$, denoted as $F_{b}$, which is given as follows:
{\setlength\arraycolsep{0.005em}   
\begin{eqnarray*}
F_{b}&=& \int  \frac{q(\zeta_{b})}{v(\zeta_{b})} dV(\zeta_{b}) + \int  \frac{q(\sigma^{2})}{v(\sigma^{2})} dV(\sigma^{2}) + \int \log(f(y^{*}|\zeta_{b}, {\sigma^{2}})) dV(\zeta_{b}, {\sigma^{2}}) \\
&=& \log(\frac{1}{b_{b}}) -\log(\frac{1}{B_{b}})-\frac{1}{2b_{b}^{2}}((A_{b}-a_{b})^{2}+B_{b}^{2}) \\
&&+(G-g)(\log H - \psi(G))+G(1-\frac{h}{H})+\log\frac{h^{g}}{\Gamma(g)}+\log\frac{H^{G}}{\Gamma(G)}\\
&&-\frac{n^{*}}{2}(\log 2\pi+ \log H - \psi(G))-\frac{1}{2}\frac{G}{H}(\sum_{i=1}^{n^{*}}(y_{i}^{*}-A_{b})^{2}- n^{*}B_{b}^{2}),
\end{eqnarray*}
}

where $\Gamma(.)$ is the gamma function. 

Once we obtain the values of each $F_{b}$ for $b=1,\cdots, B$,  we can obtain a lower bound for (\ref{ppd-vb})
{\setlength\arraycolsep{0.005em}   
\begin{eqnarray*}
\sum_{b=1}^{B} \text{E}_{q(v_{b})}[v_{b}] L_{b} \geq \sum_{b=1}^{B} \text{E}_{q(v_{b})}[v_{b}] \exp(F_{b}) \equiv F.
\end{eqnarray*} 
}
Thus,  we propose to use $F$ as an approximation for the posterior predictive distribution of $p(y^{*}|y)$.

\section{Numerical studies} 

We examine the performance of the VB method by comparing it with the two MCMC methods on simulated data.  To generate the data, we set $\mu$ and $\tau^{2}$ for the base distribution in (\ref{eq:ch3-one way random-effects model with DP}) to be $\mu = 0$ and $\tau^{2} = 16$ and $\sigma^{2}$ equal to $0.64$.  We use the truncated stick-breaking representation to construct the random distribution $F$.  For  demonstration purposes,  we simply truncate $F$ at level 5, shown in Table \ref{tbl:random distribution $F$}.  A data set of 60 groups data are generated from $F$ , and each group contains 80 data points. We use 50 groups as the observed data and  10 groups as the future data. 
\begin{table}[h]
\caption{A random distribution $F$, truncated at level 5}
\label{tbl:random distribution $F$}
\begin{center}
\begin{tabular}{ c | c   c   c   c   c}
\cline{1-6}
 $\zeta_{b}$  &  -2.22 &  -0.54   & 1.01  &  4.28    & 7.10\\ 
$P(\zeta_{b})$ &  0.35 & 0.14    & 0.13  &  0.13    & 0.26\\ 
\end{tabular}
\end{center}
\end{table}

In the VB learning,  we assume we have  no knowledge about the distribution $F$, and also mis-specify the truncation level to 10.   The algorithm converges after 19 iterations. 
\begin{table}[h]
\caption{The VB approximations  for the random distribution $F$ }\label{tbl:VB mean}
\begin{center}
\begin{tabular}{ c c c c c c c c c c }
\hline 
   E$[v_{1}]$ &   E$[v_{2}]$&   E$[v_{3}]$&   E$[v_{4}]$ &   E$[v_{5}]$ &  E$[v_{6}]$ &   E$[v_{7}]$&   E$[v_{8}]$&   E$[v_{9}]$ &   E$[v_{10}]$\\
 E$[\zeta_{1}]$ &   E$[\zeta_{2}]$&   E$[\zeta_{3}]$&   E$[\zeta_{4}]$ &   E$[\zeta_{5}]$ &  E$[\zeta_{6}]$ &   E$[\zeta_{7}]$&   E$[\zeta_{8}]$&   E$[\zeta_{9}]$ &   E$[\zeta_{10}]$\\
\hline 
 0.167 & 0.16 & 0.12  & 0.12 & 0.01  & 0.01 & 0.01  & 0.13   & 0.13  & 0.11 \\
 -2.24 &  -2.24 & -0.55  & 0.97  &   2.06 & 2.06  & 2.06  & 4.23   & 7.12   & 7.12
\end{tabular}
\end{center}
\end{table}
Table \ref{tbl:VB mean} gives the expected values for $v_{b}$ and $\zeta_{b}$ under the VB approximations. We can see a clear pattern. The expected probability weights for the component 5, 6, and 7, are close to zero.  This may suggest they can be ruled out from the true model.  The component 1 and 2 share the exact same value of $-2.24$, which is close to the value of component 1 in Table \ref{tbl:random distribution $F$}, and the cumulated expected probability weight of $0.327$ is also close to $0.35$ in Table \ref{tbl:random distribution $F$}.  We can observe a similar situation for component 9 and 10.  Thus,  by combining same components (with same values) and ruling out the empty components (with very small probability weights),  we can conclude that VB picks up 5 components for the random distribution $F$.  
  
For the Polya-urn type Gibbs sampler (Algorithm \ref{algm:neal}), we run $2\times10^{5}$ iterations. The computational time for this Gibbs sampler is about more than 10,000 times of the one required by the VB method. We use the last $20\%$ data, which we believe the chain has reached its stationary distribution. To reduce the serial correlation effect, we pick the every $25^{th}$ data point.   The frequencies of the distinct number of $\zeta_{b}$ are given in Table \ref{tbl:Polya-urn like Gibbs number of N_active}.  We see that the posterior probability  favors 5, 6, or 7 components, and 6 components has the largest probability.   
\begin{table}[h]
\caption{Posterior probabilities for the number of $\zeta$}
\label{tbl:Polya-urn like Gibbs number of N_active}
\center
\begin{tabular}{ p{1.2cm} | p{0.6cm}  p{0.6cm}p{0.6cm}p{0.6cm}p{0.6cm}p{0.6cm} }
\cline{1-7}
 Num.  &  5 &  6   & 7  &  8    & 9 & 10\\ 
P(Num.) &  0.270  & 0.386     & 0.254   &  0.068     & 0.018 & 0.002 
\end{tabular}
\end{table}
For the blocked Gibbs sampler (Algorithm \ref{algm:ishwaran}), we run $2.5 \times10^{6}$ iterations, which requires about 15,000 times of the computational time as mush as the VB method. The last $20\%$ data is used. To reduce the serial correlation effect, we pick the every $25^{th}$ data point.  Even with the order constraints on $\zeta$,  the chain still shows the signs of label switching.   Thus, a single value of $v_{b}$ or $\zeta_{b}$ may lose the interpretability.

Finally, we compare the posterior predictive distribution approximated by the three methods.  We compute   the log predictive likelihoods, shown in Table \ref{tbl:pllh},  for the 10 groups of future data.  For the Gibbs samplers, additional 2,500 samples are collected and used in the computation.  We see that the three methods  give very close values.  The mean values are given as $-95.95$, $-97.30$, $-97.32$ respectively.  A $t$ test, for the log predictive likelihoods computing by Algorithm \ref{algm:ishwaran} and by VB, is performed, and it can not reject the hypothesis that the true difference in means is equal to 0 at a p-value equal to 0.9923, and we also can obtain a p-value equal to 0.5049 for Algorithm \ref{algm:ishwaran} versus  Algorithm \ref{algm:neal},
\begin{table}[h]
\caption{Log predictive likelihood for 10 groups of future data}\label{tbl:pllh}
\center
\begin{tabular}{ r r r }
Polya-urn & Blocked & VB\\
\hline 
-96.19  &  -97.40  & -97.29  \\
-98.43  & -99.67 & -99.88 \\ 
-89.45   &  -90.59  &  -90.46 \\
-97.35  & -98.53  &  -98.74 \\
-104.31  & -105.84  & -105.90 \\
-95.64  & -96.76  & -96.88  \\
-90.36  & -91.50   & -91.37 \\
-99.84  &-100.82      & -100.62  \\
-92.86 & -95.53   &-95.47  \\
-95.11 & -96.32  & -96.54 
\end{tabular}
\end{table}
\section{Discussion}
The variational Bayes method provides a computational efficient  technique to approximate certain posterior quantities in the context of hierarchical  modelling using  Dirichlet  process priors.  To avoid the limitation in the existing variational formalism which relies on conjugate exponential families, we consider VB in a new framework. The parameter separation parameterization gives  a factorization which allows  flexible dependence structures.    Based on this new framework,  we provide a full variational solution for the Dirichlet process with non-conjugate base prior.  The numerical results show that the VB method is  very computationally efficient.  Moreover, the comparison with two different MCMC methods shows that VB provides accurate approximations for the posterior predictive distribution.  Finally,  we propose an empirical method to estimate the truncation level for the truncated DP.

\section{Appendix} 

To prove Theorem \ref{thm: ch2-DP-VBMA}, we give the following lemma first.

\begin{lemma}
\label{thm: ch2-VBMA}
Let $p(y,\tau)$ be the  joint distribution of data  $y$  and a model parameter vector $\tau$. 
The VB approximations $of q(\tau_{\mathcal{C}_{i}}|\tau_{\mathcal{P}_{i}})$ and $q( \tau_{\mathcal{P}_{i}})$ in (\ref{eq:ch2-VBMA-factorization})  are given by
{\setlength\arraycolsep{0.005em}   
\begin{eqnarray}
q(\tau_{\mathcal{C}_{i}} | \tau_{\mathcal{P}_{i}})  &\propto&  \exp\left( E_{ q( \tau_{\setminus (\mathcal{C}_{i} \cup \mathcal{P}_{i})} ) }[\log  p(y, \tau) ]  \right), \label{eq:thm-vb-1} \\
q( \tau_{\mathcal{P}_{i}} )  &\propto&  \exp \left(- E_{ q(\tau_{\mathcal{C}_{i}} | \tau_{\mathcal{P}_{i}} )  }[\log  q(\tau_{\mathcal{C}_{i}} | \tau_{\mathcal{P}_{i}} ) ] \right) \exp \left( E_{ q( \tau_{\setminus\mathcal{P}_{i} } )}[\log  p(y, \tau) ]  \right). 
\label{eq:thm-vb-2}
\end{eqnarray}
}
\begin{proof}: The Kullback-Leible divergence from $q(\tau)$ to $p(\tau|y)$ can be written as
{\setlength\arraycolsep{0.005em}   
\begin{eqnarray}
KL(q(\tau)||p(\tau|y) )= \log p(y)  - \int q(\tau) \log   \frac{p(\tau , y) } {q(\tau)} d\tau. \label{eq: ch2-KL2LB} 
\end{eqnarray}
}
Plugging (\ref{eq:ch2-VBMA-factorization}) into (\ref{eq: ch2-KL2LB}) and re-arrange the terms with respect to $q(\tau_{\mathcal{C}_{i}} | \tau_{\mathcal{P}_{i}})$, we can obtain the following expression:
{\setlength\arraycolsep{0.005em}   
\begin{eqnarray}
&&\text{KL}(q(\tau)||p(\tau|y) )=\nonumber\\
&& E_{q(\tau_{\mathcal{P}_{i}} )} [\mbox{KL} ( q(\tau_{\mathcal{C}_{i}} | \tau_{\mathcal{P}_{i}} ) ||\frac{1}{Z} \exp( E_{ q( \tau_{\setminus\mathcal{C}_{i}\cup\mathcal{P}_{i} ) } }[\log  p(y, \tau) ]  )   ) ]+ \log p(y) +  K \label{eq:ch2-thm-KL-C}
\end{eqnarray}
}
where $Z$ is a normalization constant,   and K is a constant with respect to $q(\tau_{\mathcal{C}_{i}} | \tau_{\mathcal{P}_{i}})$. The first term on the right hand side of  (\ref{eq:ch2-thm-KL-C}) is the only term which depends  on $q(\tau_{\mathcal{C}_{i}} | \tau_{\mathcal{P}_{i}})$.  Then,  the minimum value of $\text{KL}[q(\tau) || p(\tau|y) ]$ is achieved when  the first term of the right-hand side of  (\ref{eq:ch2-thm-KL-C}) equals to zero.  Thus,  we obtained 
{\setlength\arraycolsep{0.005em} 
 \begin{eqnarray}
q(\tau_{\mathcal{C}_{i}} | \tau_{\mathcal{P}_{i}}) = \frac{1}{Z}\exp\left( E_{ q( \tau_{\setminus (\mathcal{C}_{i} \cup \mathcal{P}_{i})} ) }[\log  p(y, \tau) ]  \right). \nonumber 
\end{eqnarray}
}
Similar to (\ref{eq:thm-vb-2}).
\end{proof}
\end{lemma}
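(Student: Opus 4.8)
The plan is to derive both fixed-point equations (\ref{eq:thm-vb-1}) and (\ref{eq:thm-vb-2}) by the standard coordinate-ascent argument for the variational lower bound, optimising one factor of (\ref{eq:ch2-VBMA-factorization}) at a time while freezing the others, and in each case recognising the factor-dependent part of the objective as a single Kullback--Leibler divergence whose nonnegativity pins down the optimiser. The entry point is the identity (\ref{eq: ch2-KL2LB}), which expresses $\mathrm{KL}(q(\tau)\|p(\tau|y))$ as $\log p(y)$ minus the lower bound $\mathcal{L}(q) = E_q[\log p(y,\tau)] - E_q[\log q(\tau)]$. Since $\log p(y)$ does not involve $q$, minimising the divergence is the same as maximising $\mathcal{L}(q)$, and I will carry out this maximisation factor by factor.

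For part ($i$) I freeze $q(\tau_{\mathcal{P}_i})$ and every block $q(\tau_{\mathcal{F}_{i'}})$ with $i'\neq i$, and collect the terms of $\mathcal{L}$ that depend on $q(\tau_{\mathcal{C}_i}|\tau_{\mathcal{P}_i})$. Writing $\log\tilde p(\tau_{\mathcal{C}_i},\tau_{\mathcal{P}_i}) := E_{q(\tau_{\setminus(\mathcal{C}_i\cup\mathcal{P}_i)})}[\log p(y,\tau)]$, and using that under (\ref{eq:ch2-VBMA-factorization}) the frozen blocks carry exactly the coordinates $\tau_{\setminus(\mathcal{C}_i\cup\mathcal{P}_i)}$, the energy term becomes $E_{q(\tau_{\mathcal{P}_i})}E_{q(\tau_{\mathcal{C}_i}|\tau_{\mathcal{P}_i})}[\log\tilde p]$ and the only relevant entropy contribution is $-E_{q(\tau_{\mathcal{P}_i})}E_{q(\tau_{\mathcal{C}_i}|\tau_{\mathcal{P}_i})}[\log q(\tau_{\mathcal{C}_i}|\tau_{\mathcal{P}_i})]$. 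Completing the square by introducing the $\tau_{\mathcal{C}_i}$-normaliser $Z(\tau_{\mathcal{P}_i})=\int \tilde p\, d\tau_{\mathcal{C}_i}$ recasts these terms as $-E_{q(\tau_{\mathcal{P}_i})}[\mathrm{KL}(q(\tau_{\mathcal{C}_i}|\tau_{\mathcal{P}_i})\|Z^{-1}\tilde p)]$ up to a constant in $q(\tau_{\mathcal{C}_i}|\tau_{\mathcal{P}_i})$, which is exactly (\ref{eq:ch2-thm-KL-C}). Because the inner divergence is nonnegative for each value of $\tau_{\mathcal{P}_i}$, the outer expectation is minimised by driving it to zero pointwise, giving $q(\tau_{\mathcal{C}_i}|\tau_{\mathcal{P}_i})\propto\exp(E_{q(\tau_{\setminus(\mathcal{C}_i\cup\mathcal{P}_i)})}[\log p(y,\tau)])$, i.e. (\ref{eq:thm-vb-1}).

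For part ($ii$) I instead hold the conditional factor $q(\tau_{\mathcal{C}_i}|\tau_{\mathcal{P}_i})$ and the blocks $i'\neq i$ fixed and optimise the marginal $q(\tau_{\mathcal{P}_i})$. The essential point, and \textbf{the main obstacle}, is that the conditional-entropy term $-E_q[\log q(\tau_{\mathcal{C}_i}|\tau_{\mathcal{P}_i})]$ is itself an average over $\tau_{\mathcal{P}_i}$ and therefore belongs to the $q(\tau_{\mathcal{P}_i})$-dependent part; this is precisely what generates the extra factor $\exp(-E_{q(\tau_{\mathcal{C}_i}|\tau_{\mathcal{P}_i})}[\log q(\tau_{\mathcal{C}_i}|\tau_{\mathcal{P}_i})])$ in (\ref{eq:thm-vb-2}) that is absent from (\ref{eq:thm-vb-1}). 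Collecting the energy, this conditional-entropy term, and the marginal entropy $-\log q(\tau_{\mathcal{P}_i})$ gives, up to an additive constant, $\int q(\tau_{\mathcal{P}_i})\log\{e^{G(\tau_{\mathcal{P}_i})}/q(\tau_{\mathcal{P}_i})\}\,d\tau_{\mathcal{P}_i}$, where $G(\tau_{\mathcal{P}_i}) = E_{q(\tau_{\mathcal{C}_i}|\tau_{\mathcal{P}_i})}[\log\tilde p] - E_{q(\tau_{\mathcal{C}_i}|\tau_{\mathcal{P}_i})}[\log q(\tau_{\mathcal{C}_i}|\tau_{\mathcal{P}_i})]$. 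This is once more a single divergence in $q(\tau_{\mathcal{P}_i})$ against $e^{G}$ normalised, so its minimiser is $q(\tau_{\mathcal{P}_i})\propto e^{G}$.

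The last step is to rewrite the energy piece of $G$ in the displayed form. Under (\ref{eq:ch2-VBMA-factorization}), conditional on $\tau_{\mathcal{P}_i}$ the variables $\tau_{\mathcal{C}_i}$ and $\tau_{\setminus(\mathcal{C}_i\cup\mathcal{P}_i)}$ are independent with laws $q(\tau_{\mathcal{C}_i}|\tau_{\mathcal{P}_i})$ and $\prod_{i'\neq i}q(\tau_{\mathcal{F}_{i'}})$, so iterating the expectation over $\tau_{\setminus(\mathcal{C}_i\cup\mathcal{P}_i)}$ (inside $\tilde p$) and then over $\tau_{\mathcal{C}_i}$ reconstructs the full conditional expectation over all coordinates except $\tau_{\mathcal{P}_i}$; by the tower property this yields $E_{q(\tau_{\mathcal{C}_i}|\tau_{\mathcal{P}_i})}[\log\tilde p] = E_{q(\tau_{\setminus\mathcal{P}_i})}[\log p(y,\tau)]$. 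Substituting into $e^{G}$ produces exactly (\ref{eq:thm-vb-2}). The only genuinely delicate points are the bookkeeping of which coordinates each frozen block carries and the correct attribution of the conditional-entropy term when the roles of the two factors are interchanged between parts ($i$) and ($ii$); everything else is the routine ``introduce a normaliser to complete a Kullback--Leibler divergence'' manipulation, and the nonnegativity of that divergence supplies the optimiser in each case.
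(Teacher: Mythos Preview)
Your proposal is correct and follows essentially the same route as the paper: start from the identity $\mathrm{KL}(q\|p(\cdot|y))=\log p(y)-\mathcal{L}(q)$, insert the factorization (\ref{eq:ch2-VBMA-factorization}), isolate the dependence on the factor being optimised, and recognise that dependence as an expected Kullback--Leibler divergence whose nonnegativity forces the minimiser. The paper's proof is terser---it writes down (\ref{eq:ch2-thm-KL-C}) directly and dismisses part ($ii$) with ``Similar''---whereas you spell out the bookkeeping for part ($ii$), in particular the role of the conditional-entropy term and the tower-property step $E_{q(\tau_{\mathcal{C}_i}|\tau_{\mathcal{P}_i})}E_{q(\tau_{\setminus(\mathcal{C}_i\cup\mathcal{P}_i)})}[\log p(y,\tau)]=E_{q(\tau_{\setminus\mathcal{P}_i})}[\log p(y,\tau)]$; these are exactly the details the paper suppresses, but the argument is the same.
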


\subsection{Proof of Theorem \ref{thm: ch2-DP-VBMA}} 

We write the  joint distribution of $p(y, \tau)$ as $p(y|\tau)\pi(\tau)$, where  $\pi(\tau)$ is a prior distribution.  Given that the density function of $y_{j}$ can be written in the form of (\ref{eq:defition-LPSF}), we can write the likelihood function with respect to $\tau_{\mathcal{C}_{i}}$ as
{\setlength\arraycolsep{0.005em} 
\begin{eqnarray}
p(y| \tau)&& =\exp (\displaystyle \sum_{j}^{J}  ( h(y_{j})+\displaystyle \sum_{c=1}^{C_{{C}_{i}}} g_{c}(\tau_{\mathcal{C}_{i}}, y_{j}) K_{{C}_{i},c} + J_{{C}_{i}} ) ), \label{eq: ch2-DP-VBMA-likelihood-1-app} 
\end{eqnarray}
}

where $K_{{C}_{i},c}$ and $J_{{C}_{i}}$ are constant with respect to  $\tau_{\mathcal{C}_{i}}$.  We assume that the priors have the following forms
{\setlength\arraycolsep{0.005em} 
\begin{eqnarray}
\pi(\tau) = \displaystyle \prod_{i}^{K} \pi(\tau_{\mathcal{C}_{i}} |\tau_{\mathcal{P}_{i}})  \pi(\tau_{\mathcal{C}_{i}}), \label{eq:ch2-VBMA-factorization-priors}
\end{eqnarray}
}
Thus, $p(y, \tau)$ in (\ref{eq:thm-vb-1} ) can be replaced by (\ref{eq: ch2-DP-VBMA-likelihood-1-app}) and (\ref{eq:ch2-VBMA-factorization-priors}), and then the results of (\ref{eq: thm-ch2-DP-VBMA-1}) is a direct application of the linearity of expectation. Similarly, we can obtain the result of (\ref{eq: thm-ch2-DP-VBMA-2}).

\pagebreak

\bibliographystyle{imsart-nameyear}
\bibliography{vb-dp.bib}

\end{document}